\documentclass[
aps,pra,
superscriptaddress,
longbibliography,
a4paper,
reprint
]{revtex4-2} 
\usepackage[utf8]{inputenc}
\usepackage[english]{babel}
\usepackage[T1]{fontenc}

\usepackage{graphicx}
\usepackage{mathtools}
\usepackage{braket}
\usepackage{amsmath}
\usepackage{amssymb}
\usepackage{amsthm}
\usepackage{units}

\usepackage{xcolor}
\usepackage{thm-restate}
\usepackage{dcolumn}
\usepackage{multirow}

\usepackage{hyperref}
\hypersetup{colorlinks=true}
\hypersetup{citecolor=blue}
\hypersetup{linkcolor=blue}
\hypersetup{urlcolor=blue}

\newtheorem{lem}{Lemma}
\newtheorem{res}{Result}
\newcommand{\vl}[1]{\vec{\lambda}_{#1}}
\DeclareMathOperator\tr{Tr}
\newcommand*\dd{\mathrm{d}}
\DeclarePairedDelimiter\abs\lvert\rvert
\DeclarePairedDelimiter\norm\lVert\rVert

\DeclareRobustCommand{\stirling}{\genfrac\{\}{0pt}{}}

\begin{document}

\title{Bounding the classical cost of simulating quantum behaviors in the  prepare-and-measure scenario}

\author{Sebastian Schl\"osser}
\email{sebastian.schlosser@physics.uu.se}
\affiliation{Department of Physics and Astronomy, Uppsala University, Box 516, 75120 Uppsala, Sweden}
\affiliation{Nordita, KTH Royal Institute of Technology and Stockholm University, 10691, Stockholm, Sweden}
\affiliation{Naturwissenschaftlich-Technische Fakultät, Universität Siegen, 57068 Siegen, Germany}
\author{Matthias Kleinmann}
\email{matthias.kleinmann@uni-muenster.de}
\affiliation{Department for Quantum Technology, University of Münster, 48149 Münster, Germany}
\affiliation{Naturwissenschaftlich-Technische Fakultät, Universität Siegen, 57068 Siegen, Germany}


\begin{abstract}
We study the prepare-and-measure scenario in which Alice transmits a quantum system to Bob, who then performs a quantum measurement. The quantum state of the system is unknown to Bob, and the measurement is unknown to Alice. It has recently been shown that shared randomness and two bits of classical communication are necessary and sufficient to simulate the transmission of a qubit. We show that the communication cost can be reduced to an average of \unit[1.89]{bits}. We then study restricted sets of state preparations: First, for a restriction to real-valued qubit states, if the communication of a classical trit is sufficient, we show that the corresponding protocol must have a convoluted form. We then reduce the smallest qubit scenario requiring two bits of classical communication to only 6 state preparations and 5 measurements. For a qutrit, it is not known whether the communication cost is finite; we identify a scenario that requires at least $5$ classical messages, already for the simulation of the real qutrit. Finally, we develop a method for restricted sets of states, that allows us to lower bound the classical communication cost based solely on the set of quantum states.
\end{abstract}

\maketitle

\section{\label{sec:intro}Introduction}
A fundamental question in quantum information theory regards the advantage of quantum systems over their classical counterparts in information processing tasks. Holevo showed that quantum systems and their classical counterparts are equivalent for transmitting classical information \cite{Holevo}. This result was generalized by Frenkel and Weiner \cite{Frenkel_2015}, showing that classical systems can reproduce the correlations of quantum systems of equal size, provided that the receiver has a fixed measurement. Consequentially, incompatible measurements are necessary for a quantum advantage in the prepare-and-measure scenario \cite{Vieira_2023,de_Gois_2021,Egelhaaf}.
However, for communication tasks where the output also depends on an input given to Bob, quantum systems can provide an advantage over their classical counterparts. One such example is random access coding, where the transmission of a qubit instead of a classical bit can improve the success probability \cite{Wiesner1983ConjugateC,ambainis1998densequantumcodinglower,ambainis2009quantumrandomaccesscodes}. In the general prepare-and-measure scenario, one investigates classical models that simulates the quantum statistics, in particular using only finite classical communication \cite{Brassard_exponential_gap,Cerf_2000,Massar_ClSimEnt,STEINER_2000,TonerBacon,Renner_PandM,Degorre}. A breakthrough was made by Toner and Bacon \cite{TonerBacon}, as they showed that a qubit prepare-and-measure scenario restricted to projective measurements can be simulated classically with two bits of communication and shared randomness. This result was recently extended to generalized measurements (POVMs) by Renner et al.\ \cite{Renner_PandM}: A qubit prepare-and-measure scenario can be simulated classically by two bits of communication and the protocol was shown to be optimal in terms of the worst-case communication cost. To the best of our knowledge, it is an open problem whether quantum systems beyond qubits can be simulated using finite classical communication, even when considering only the average of the message length.

In this work, we show that the two bits of communication in the qubit simulation protocols by Toner and Bacon \cite{TonerBacon} and Renner et al.\ \cite{Renner_PandM} can be compressed, resulting in a single-shot protocol for simulating a qubit with an average communication cost of \unit[1.89]{bits} and a worst-case communication cost of \unit[3]{bits}. We discuss the simulation of the real qubit, where the results of Gallego et al.\ \cite{Gallego_2010} and Renner et al.\ \cite{Renner_PandM} imply that an optimal simulation protocol for the real qubit requires either a trit ($3$ messages) or two bits ($4$ messages) of communication. We show that if the real qubit admits a simulation using a classical trit, the encoding step in a corresponding protocol is necessarily convoluted. We then turn to minimal simulation scenarios and demonstrate that a separation between qubit correlations and classical correlations using a classical trit already occurs in a scenario with only $6$ state preparations and $5$ projective measurements, improving on the smallest previously known construction. Regarding the open problem of the qutrit, we show that the simulation of the real qutrit requires strictly more communication than the qubit case, that is, we show that an alphabet consisting of at least $5$ messages is required. Finally, we prove a structural property of all classical simulation models by relating the problem to quantum state discrimination. In particular, any simulation model is restricted by the distinguishability of the quantum states in a corresponding state discrimination task. We provide a necessary condition for the existence of a classical simulation protocol, which can be expressed as a linear program. The program depends only on a set of input states and provides a particularly efficient method to certify lower bounds on the worst-case communication cost.

\section{\label{sec:cl_sim_PM}Classical simulation of the prepare-and-measure scenario}
We consider the prepare-and-measure scenario, where Alice prepares a quantum state and sends it to Bob who performs a quantum measurement on the state. We denote the dimension of the quantum system by $d_Q$. A quantum state is described by a density operator, that is, a positive-semidefinite operator $\rho$ with $\tr(\rho)=1$. A quantum measurement is described by a positive operator-valued measure (POVM), that is, a collection of positive-semidefinite operators $\set{M_b}_b$ with $\sum_b M_b = \openone$.

\begin{figure}
    \centering
    \includegraphics[width=0.7\linewidth]{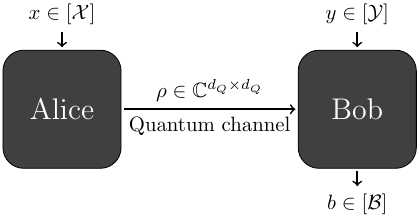}
    \caption{The quantum prepare-and-measure scenario where Alice receives an input $x$, prepares a $d_Q$-dimensional quantum state $\rho$ and transmits it to Bob. Bob receives an input $y$ and performs a quantum measurement which produces an outcome $b$.}
    \label{fig:PM_quantum}
\end{figure}

The referee provides Alice with a label $x$ and asks her to prepare the quantum state $\rho_x$ and transmit it to Bob. Bob is provided with a label $y$ that specifies the measurement and performs the measurement represented by the POVM $\set{M_{b\vert y}}_{b}$, where $b$ denotes the measurement outcome. The resulting outcome probabilities are given by Born's rule, yielding the behaviors
\begin{equation}
    p_Q(b\vert x,y) = \tr(\rho_x M_{b\vert y}).
\end{equation}
For settings with a finite number of inputs and outputs, we denote the cardinality of Alice's input set by $\mathcal{X}$, the cardinality of Bob's input set by $\mathcal{Y}$ and the cardinality of Bob's output set by $\mathcal{B}$. This scenario is illustrated in Fig.~\ref{fig:PM_quantum}.

We consider classical models for this task where we allow Alice and Bob to have access to a shared random variable $\lambda$ with range $\Lambda$ and distribution $\pi(\lambda)\dd\lambda$. That is, in each round, Alice and Bob have access to the same random value $\lambda$, but this value is not known to the referee. As before, the referee provides Alice with the label $x$, but in the classical model she can only transmit a classical system $c\in[d_C]\coloneqq\set{1,\ldots,d_C}$ of ``dimension'' $d_C$ to Bob. Alice selects $c$ with  probability $p_A(c\vert x,\lambda)$ and submits it to Bob. Bob is provided with the label $y$ and asked to provide an outcome $b$. Correspondingly, he provides the outcome $b$ with probability $p_B(b|c,y,\lambda)$. This scenario is illustrated in Fig.~\ref{fig:PM_classical}. Therefore, the behaviors for the classical model are given by
\begin{equation}
\label{eq:cl_model}
    p_C(b\vert x,y) = \int_\Lambda \dd\lambda\,\pi(\lambda) \sum_{c=1}^{d_C} p_A(c\vert x,\lambda) p_B(b\vert c,y,\lambda).
\end{equation}
\begin{figure}
    \centering
    \includegraphics[width=0.7\linewidth]{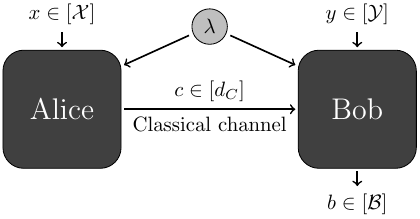}
    \caption{The classical prepare-and-measure scenario where Alice and Bob share classical correlations $\lambda$. Alice receives an input $x$, prepares a classical message from an alphabet of length $d_C$ and transmits it to Bob. Bob receives an input $y$ and provides an outcome $b$.}
    \label{fig:PM_classical}
\end{figure}%
The $d_Q$-dimensional quantum prepare-and-measure scenario can be simulated by a $d_C$-dimensional classical system if there exist encodings $p_A(c\vert x,\lambda)$, decodings $p_B(b\vert c,y,\lambda)$ and a suitable $\pi(\lambda)$ on $\Lambda$ such that
\begin{equation}
    p_C(b\vert x, y) = p_Q(b\vert x,y),
\end{equation}
for all possible inputs $x$ and $y$, and outputs $b$.

We mention that without loss of generality, one can assume that Alice acts deterministically by absorbing her local randomness into the shared randomness. To this end, we introduce a new variable $\lambda^\prime = (\lambda,\mu)$, where $\mu$ is from the uniform distribution over $[0,1]$.
We can now write the response function of Alice as
\begin{equation}
    p_A(c\vert x,\lambda) = \int_0^1 \text{d}\mu\, D_A(c\vert x,\lambda^\prime),
    \label{eq:prob_to_detstr_1}
\end{equation}
where
\begin{equation}
    D_A(c\vert x,\lambda^\prime) =
    \begin{cases}
        1 \text{ if } \mu \le p_A(c\vert x,\lambda)\\
        0 \text{ else,}
    \end{cases}
    \label{eq:prob_to_detstr_2}
\end{equation}
is a deterministic response function. The same argument can be applied for Bob's response function.
Expressing the classical model in terms of deterministic response functions for Alice and Bob, it allows for an implementation as a linear program \cite{Gallego_2010,Renner_PandM}, which we discuss in Appendix \ref{appendix:A}. However, as the number of deterministic strategies (and therefore the number of variables and constraints in the linear program) grows exponentially in the number of inputs and outputs, the program is only tractable for small instances. In Appendix \ref{appendix:A}, we describe a method to reduce the number of deterministic strategies in the linear program, allowing us to tackle larger instances of the problem. 

Previous works have considered two main ways to  quantify the classical communication cost of a simulation protocol. In the first approach one uses the length of the classical alphabet $d_C < \infty$ to quantify the cost of a simulation protocol \cite{TonerBacon,Renner_PandM}. We refer to this metric as the worst-case communication cost $\mathcal{C} = \log_2 d_C$.
In the second approach, instead of requiring the communication to be finite in the worst case, one requires the communication to be only finite on average \cite{Hansen_SimpleAlgo,Maudlin_1992,STEINER_2000,Cerf_2000}. In other words, we allow for an infinite alphabet length $d_C$ but we require the \emph{average communication cost}
\begin{equation}
  \bar{\mathcal C}=
   \sup_x\int \dd\lambda\, \pi(\lambda) l(x,\lambda)
\end{equation}
of the message to be finite. Here, $l(x,\lambda)$ denotes the number of bits of the classical message that is assigned to the input $x$ for a given $\lambda$. Clearly, every protocol with finite $\mathcal{C}$ corresponds to a protocol with finite $\bar{\mathcal{C}}$. Thus, we have that for a given setting,
\begin{equation}
    \min\bar{\mathcal{C}} \le \min\mathcal{C},
\end{equation}
where each minimization is performed over all simulation protocols.  It was shown that without shared randomness, a classical simulation of a qubit with finite worst-case communication cost $\mathcal{C}$ is impossible \cite{Massar_ClSimEnt}. This directly implies that a protocol with a finite amount of shared randomness, $\abs{\Lambda}<\infty$, is impossible, because any finite randomness could be included into finite communication. Note that in a scenario where one allows for two-way communication, every quantum system can be simulated with finite average communication cost \cite{Massar_ClSimEnt}. 

Toner and Bacon \cite{TonerBacon} established $\min \mathcal{C} \le \unit[2]{bits}$, for all qubit states and projective measurements. Renner et al.\ \cite{Renner_PandM} recently showed $\min \mathcal{C}= \unit[2]{bits}$, for all qubits states and generalized measurements (POVMs), thus solving the most general qubit case for this notion of communication cost.

It has been shown that a classical simulation requires a classical alphabet length $d_C$ that scales at least exponentially in the quantum dimension $d_Q$ \cite{Buhrmann_exponential_gap,Brassard_exponential_gap,Montina2011,Havlicek2020}. The approach by Montina in Ref.~\cite{Montina2011} can be formulated as a state discrimination task of orthogonal quantum states, and the proof uses a result regarding the maximal size of distance evading sets on hyperspheres. The authors of Ref.~\cite{Havlicek2020} consider a task involving quantum state antidistinguishability and in their proof they use spherical codes. Apart from their general exponential lower bounds, they show that a simulation protocol for a qutrit ($d_Q=3$) has $d_C\ge5$ and a simulation for a ququat ($d_Q=4$) has $d_C\ge 10$.

However, beyond the qubit case, no upper bounds on the communication cost are known. In particular, it is not known whether the qutrit admits a classical simulation with finite communication, that is, with $d_C<\infty$.

\section{\label{sec:sim_protocols}Simulation protocols}
\subsection{Qubit}
\label{subsec:sim_qubit}
Renner et al.\ showed that in terms of the worst-case communication cost $\mathcal{C}$, two bits of communication are necessary and sufficient to simulate a qubit \cite{Renner_PandM}. Furthermore, they also addressed possible protocols with an average communication cost $\bar{\mathcal{C}}$ lower than \unit[2]{bits}. They showed that when Alice sometimes sends a bit (or less), a simulation of a qubit is impossible. Their proof only applies to the case where the length of the message that Alice sends is decided only by the shared randomness. However, in general, the length of the message can in addition depend on the input state.

We show now that in the general case, a qubit can be classically simulated with an average communication cost of $\bar{\mathcal{C}} = \unit[1.89]{bits}$. To this end, we employ a variable-length encoding \cite{huffman,yin2021multichannelhuffmancodeasymmetricalphabet} and thus, in some cases Alice sends just $1$ bit, while in other cases, she sends up to \unit[3]{bits}. This demonstrates that a simulation of a qubit with a nonzero bit part is possible. Our protocol is readily constructed from the protocol by Renner et al.\ \cite{Renner_PandM}, the only difference lies in the encoding and decoding of the classical message. It is sufficient to consider pure input states $\rho_x$, as the existence of a classical model for a set of pure states implies the existence of a protocol for their convex hull.

We recall the first two steps of the protocol \cite{Renner_PandM} which also resemble the first steps in the protocol by Toner and Bacon \cite{TonerBacon}. Here, $\Theta(z)=1$ when $z\ge 0$ and $\Theta(z)=0$ when $z<0$.
\begin{itemize}
            \item[1.] Alice and Bob share two vectors $\vl{1},\vl{2} \in \mathbb{R}^{3}$, which are uniformly and independently distributed on the unit radius sphere $\mathbb{S}^{2}$.
            \item[2.] Alice is given a pure state $\rho=\frac{1}{2}(\openone+\vec{x}\cdot{\vec{\sigma}})$. She prepares two bits via the formula $c_1 = \Theta(\vec{x}\cdot\vl{1})$ and $c_2 = \Theta(\vec{x}\cdot\vl{2})$ and sends them to Bob.
\end{itemize}
The two transmitted bits $c_1,c_2$ do not necessarily contain two bits of information, as they are in general correlated for an observer that has access to the shared randomness $\vl{1},\vl{2}$ \cite{TonerBacon}. In fact, the only case where $c_1$ and $c_2$ are not correlated is when $\vl{1}$ and $\vl{2}$ are orthogonal. On the other hand, when $\vl{1}$ and $\vl{2}$ are (anti)-parallel, the two bits are perfectly (anti)-correlated and contain only one bit of information.

We define four regions on the Bloch sphere,
\begin{equation}\label{eq:sphere_regions}
            S_{i,j}(\vl{1},\vl{2}) = \{\vec{r}\in\mathbb{S}^2 \mid \Theta(\vec{r}\cdot\vec{\lambda}_{1})=i, \Theta(\vec{r}\cdot\vec{\lambda}_{2})=j\},
\end{equation}
where $i,j\in\set{0,1}$. They correspond to the sets of states which are assigned the same message. In the original protocols, when $\vec{x}\in S_{i,j}(\vl{1},\vl{2})$, Alice transmits the bits $c_1=i$ and $c_2=j$. The probability $p(i,j|\alpha,x)$ for the message $(i,j)$ becomes independent of the input $x$, when integrated over the shared randomness with fixed angle $\alpha = \measuredangle(\vl{1},\vl{2})\in[0,\pi]$, since then only the relative angles between $\vec x$, $\vl{1}$, and $\vl{2}$ occur in the integral. We obtain
\begin{equation}\begin{split}\label{eq:Halpha}
            &p(0,0|\alpha)=p(1,1|\alpha)=\frac{1}{2}-\frac{\alpha}{2\pi},\\
            &p(0,1|\alpha)=p(1,0|\alpha)=\frac{\alpha}{2\pi}.
\end{split}\end{equation}
Shannon's source coding theorem \cite{shannon} establishes that the entropy of a distribution forms a lower bound for the best possible lossless compression rate. In the limit of large coding block length, the optimal compression rate converges to the entropy. The entropy for the above probability distribution is given by
\begin{equation}
    H(\alpha) = -2\left[\frac{\alpha}{2\pi}\log_2 \frac{\alpha}{2\pi}+\left(\frac{1}{2}-\frac{\alpha}{2\pi}\right)\log_2\left(\frac{1}{2}-\frac{\alpha}{2\pi}\right)\right].
\end{equation}
Thus, using an encoding that only depends on $\alpha$, the lower bound for compression is given by
\begin{equation}
\int_0^\pi \text{d}\alpha \frac{\sin{\alpha}}{2} H(\alpha)\approx \unit[1.85]{bits},
\end{equation}
see also Ref.~\cite{TonerBacon}. Here, $\frac{1}{2}\sin\alpha=\frac1{4\pi}(\int \dd\phi) \sin \alpha$ is the distribution of the enclosed angle $\alpha$ of two vectors that are drawn independently from a uniform distribution on the unit sphere. 

Importantly, the communication can be compressed in the single-shot case, that is, without performing multiple rounds of the simulation in parallel. 

\begin{res}
    A qubit admits a single-shot classical simulation with an average communication cost of $\bar{\mathcal{C}} = \unit[1.89]{bits}$ and a worst-case communication cost of $\mathcal{C} = \unit[3]{bits}$.
\end{res}

To this end, we employ Huffman coding \cite{huffman,yin2021multichannelhuffmancodeasymmetricalphabet} with blocks of size $1$, which provides a minimal encoding. Using binary Huffman coding, we find a binary encoding A$^\pm$ that, on average, requires less than two bits of communication for $\alpha\le\frac{\pi}{3}$ and $\alpha\ge\frac{2\pi}{3}$. Ternary Huffman coding leads to the mixed-channel encoding B$^\pm$ where one always sends a trit, and sometimes an additional bit. The Huffman codes of higher order are trivial, since the original code has four symbols. Therefore, we define our encoding piecewise on $\alpha$ by choosing the optimal encoding for each $\alpha$, see also Fig.~\ref{fig:entropy_curve}. Then, the first two steps of the protocol read as follows.
\begin{itemize}
            \item[1. ] Alice and Bob share two vectors $\vec{\lambda}_{1},\vec{\lambda}_{2} \in \mathbb{R}^{3}$, which are uniformly and independently distributed on the unit radius sphere $\mathbb{S}^{2}$, and they define the four regions $S_{i,j}(\vec{\lambda_1},\vec{\lambda_2})$ on the unit sphere, see Eq.~\eqref{eq:sphere_regions}.        
            \item[2. ] Alice is given a pure state $\rho=\frac{1}{2}(\openone+\vec{x}\cdot{\vec{\sigma}})$. She informs Bob in which of the four regions $S_{i,j}(\vec{\lambda}_1,\vec{\lambda}_2)$ the state $\vec{x}$ lies in, that is, she sends $(i,j)$ to Bob, and encodes her message depending on $\alpha=\pi \log_2\beta$, where $\alpha = \measuredangle(\vl{1},\vl{2})$, as follows:
\end{itemize}
\begin{equation}
\begin{tabular}{l|l|l|l|l|l}
Encoding & \multicolumn{1}{c|}{A$^+$} & \multicolumn{1}{c|}{B$^+$} & \multicolumn{1}{c|}{C} & \multicolumn{1}{c|}{B$^-$} & \multicolumn{1}{c}{A$^-$}\\[0.5ex]
$\beta$ & $\left[1,\frac{9}{8}\right]$ & $\left(\frac{9}{8},\frac{4}{3}\right]$ & $\left(\frac{4}{3},\frac{3}{2}\right)$ & $\left[\frac{3}{2},\frac{16}{9}\right)$ & $\left[\frac{16}{9},2\right]$ \\[1ex] \hline
        $S_{00}$ & $0_2$   & $0_3$  & $00_2$ & $2_31_2$ & $111_2$ \\
        $S_{11}$ & $10_2$  & $1_3$  & $11_2$ & $2_30_2$ & $110_2$ \\
        $S_{01}$ & $110_2$ & $2_30_2$ & $01_2$ & $1_3$  & $10_2$  \\
        $S_{10}$ & $111_2$ & $2_31_2$ & $10_2$ & $0_3$  & $0_2$
        \end{tabular}        
\end{equation}

\begin{figure}
    \centering
    \includegraphics[width=\linewidth]{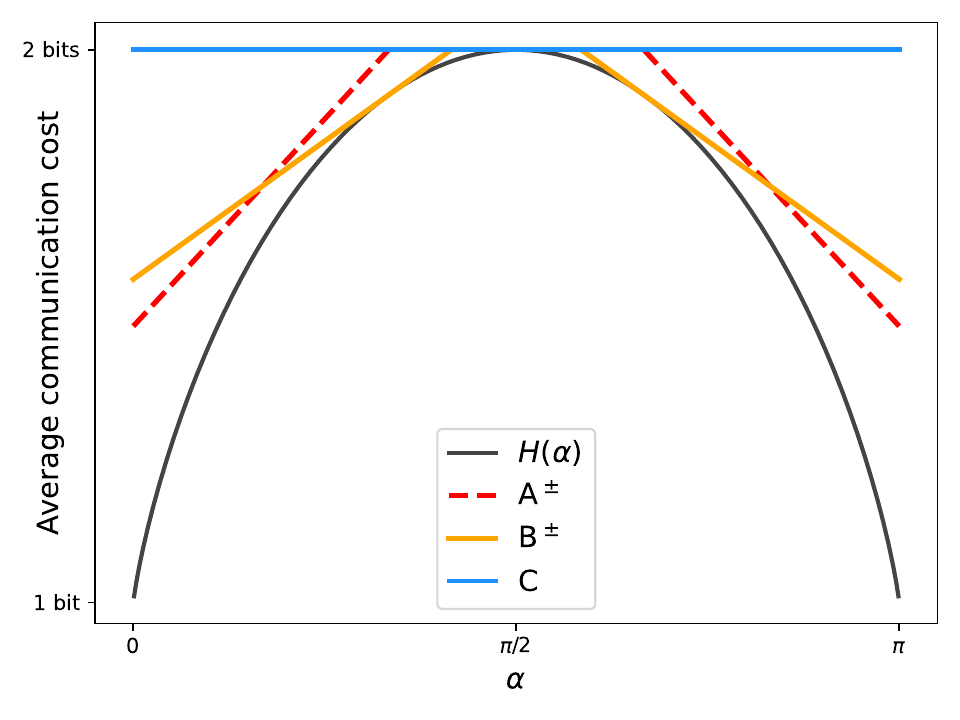}\vspace{-15pt}
    \caption{Average communication cost for different encoding strategies at a given angle $\alpha= \measuredangle(\vl{1},\vl{2})\in[0,\pi]$. The $\alpha$-asymptotic encoding yields a lower bound, given by the entropy $H(\alpha)$, see Eq.~\eqref{eq:Halpha}. For single-shot encoding, the average communication cost of $5$ different strategies ($\mathrm A^\pm$, $\mathrm B^{\pm}$, $\mathrm C$) is shown, where encoding $C$ is the original encoding. Choosing the encoding with the lowest cost at each $\alpha$ yields our piecewise single-shot encoding with average communication cost of $\bar{\mathcal C}\approx \unit[1.89]{bits}$.}
    \label{fig:entropy_curve}
\end{figure}

The encodings (and decodings) depend on the shared randomness $\lambda$ through $\alpha$. Furthermore, as we employed a variable-length encoding, the length of the message is determined by the shared randomness $\lambda$ and by the state $\vec{x}$. Note that when $\beta \le \frac{9}{8}$ or $\beta \ge \frac{16}{9}$, in some cases Alice sends just one bit, while in other cases she sends up to three bits. When $\frac{9}{8}<\beta\le\frac{4}{3}$ or $\frac{3}{2}\le\beta<\frac{16}{9}$, Alice always sends a trit and sometimes an additional bit. For $\frac{4}{3}<\beta<\frac{3}{2}$, Alice performs the original encoding and always sends two bits. The average message length of the different encodings for a given angle $\alpha$ is plotted in Fig.~\ref{fig:entropy_curve}.

The average communication cost of this protocol is given by
\begin{equation}\begin{split}
    \bar{\mathcal{C}} &=
    \sup_{\vec x} \int \dd\lambda_1\dd\lambda_2 \pi(\vec\lambda_1,\vec \lambda_2) l(\vec x,\vec\lambda_1,\vec \lambda_2)\\
    &=\int_0^\pi\text{d}\alpha \frac{\sin\alpha}{2} L(\alpha) \approx \unit[1.89]{bits},
\end{split}\end{equation}
where $\frac{\sin\alpha}{2} = \frac{1}{4\pi}\int_0^{2\pi} \dd\phi \sin\alpha$ represents the probability density of the angle $\alpha$ enclosed by two independent uniform vectors on the unit sphere. $L(\alpha)$ denotes the average codeword length for the encoding described above. That is, $L(\alpha)=\sum_{i,j} p(i,j|\alpha)L_{i,j}(\alpha)$, where $L_{i,j}(\alpha)$ is the length of the encoding for $S_{i,j}$. We evaluate the integral in Appendix~\ref{appendix:B}.

\subsection{Real qubit}
\label{subsec:sim_real_qubit}

In this section we will consider the \emph{real qubit} or \emph{rebit}, which corresponds to the $xz$-plane of a qubit. That is, the qubit states and measurement effects are restricted to lie within the $xz$-plane and can can therefore be described with real numbers. Since the real qubit is contained in a qubit, it is clear that $d_C=4$ is sufficient but is not known whether it is also necessary for a classical protocol. In Ref.~\cite{Gallego_2010} a setting of real qubit states and measurements was provided that does not admit a classical simulation with $d_C=2$ and thus, for the real qubit, $\min d_C \in \set{3,4}$. We show that if $\min d_C = 3$, then, in the corresponding simulation protocol, Alice's message has a convoluted dependence on her input. 
\begin{figure}
    \centering
    \includegraphics[width=\linewidth]{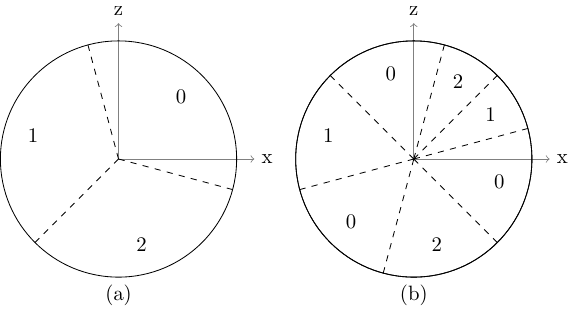}
    \caption{A deterministic assignment $D_A(c\vert x,\lambda)$ for pure states of the real qubit with $d_C=3$ and (a) $N(\lambda) = 3$ sections and (b) $N(\lambda) = 7$ sections.}
    \label{fig:real_qubit_sections}
\end{figure}

Given a deterministic strategy $D_A(c\vert x,\lambda)$, we define $N(\lambda)$ as the number of times that the classical message $c$ changes when traversing all states along the equator. This definition is illustrated in Fig.~\ref{fig:real_qubit_sections}. Clearly, for an optimal protocol with a classical alphabet of length $d_C$, it holds that
\begin{equation}
    \max_\lambda N(\lambda) \ge d_C.
\end{equation}
If the above equation does not hold, the corresponding protocol is not optimal since it always assigns less than $d_C$ messages.
On the other hand, $N(\lambda)$ can be arbitrarily large which consequently makes the formulation of the corresponding protocol convoluted. We note that the qubit protocols by Toner and Bacon \cite{TonerBacon} and Renner et al.\ \cite{Renner_PandM} with $d_C=4$ require no more sections than necessary, that is, $\max_\lambda N(\lambda) = 4$.

\begin{res}
    For an optimal classical simulation protocol for the real qubit, either (i) $d_C = 3$ and $\max_\lambda N(\lambda) \ge 7$ or (ii) $d_C = 4$
    holds.
\end{res}

It remains to show that a hypothetical simulation protocol for the real qubit with $d_C=3$ has $\max_\lambda N(\lambda) \ge 7$. We use a finite number of states and measurements to obtain bounds on $N(\lambda)$, since the full simulation of the real qubit must in particular be able to simulate any finite subset. For a given $\lambda$, the minimal number of sections of an assignment for all states that is compatible with $D_A(c\vert x,\lambda)$ is lower bounded by
\begin{equation}
    N(\lambda)\ge \abs{\set{ x\in\mathcal X | c(x,\lambda)\ne c(x+1,\lambda)}},
\end{equation}
where $c(x,\lambda)\in[d_C]$ is the message sent for $x$ and $\lambda$, and addition is taken modulo $\mathcal{X}$.

We consider the $(\mathcal{X},\mathcal{Y},\mathcal{B}) = (32,16,2)$ setting. Alice receives input $x\in[\mathcal{X}]$, and prepares the state $\rho_x = \frac{1}{2}(\openone + \cos(2\pi x/\mathcal{X})\sigma_x+\sin(2\pi x/\mathcal{X})\sigma_z)$. Bob receives input $y\in[\mathcal{Y}]$ and performs the binary measurement $M_{\pm\vert y} = \frac{1}{2}(\openone \pm \cos(\pi y/\mathcal{Y})\sigma_x\pm\sin(\pi y/\mathcal{Y})\sigma_z)$. The corresponding linear program (see Appendix~\ref{appendix:A}), where we exclude all deterministic strategies $D_A(c\vert x,\lambda)$ for which $N(\lambda) \ge 7$, is not feasible. We conjecture that $N(\lambda)\ge 7$ is not fundamental and can be pushed further with more computational resources.

\section{\label{sec:lower_bounds} Classical dimension witnesses}
In Ref.~\cite{Gallego_2010}, a concept for classicality certification of behaviors in the prepare-and-measure scenario was introduced. A general classical dimension witness $W$ can be written as
\begin{equation}
    W=\sum_{b,x,y} c_{x,y}^b\,  p_C(b\vert x,y) \le C_{d_C},
\end{equation}
with real coefficients $c_{x,y}^b$. All classical models with an alphabet of length $d_C$ produce behaviors $\{p_C(b\vert x,y)\}_{b,x,y}$ that respect the classical bound $C_{d_C}$. In contrast, a violation of the classical bound $C_{d_C}$ certifies that no classical model using an alphabet of size $d_C$ can simulate the given behavior.

For a setting with binary outcomes $b\in\set{\pm}$, a classical dimension witness is conveniently expressed in terms of expectation values, that is,
\begin{equation}
\label{eq:witness_correlators}
    W = \sum_{x,y} c_{x,y} E_{x,y} \le C_{d_C}^\prime,
\end{equation}
with coefficients $c_{x,y}$ and $E_{x,y} = p(+\vert x,y) - p(-\vert x,y)$.

Firstly, we will introduce witnesses that show a separation between qubit behaviors and all behaviors which are trit-simulable. Secondly, we will provide a witness that shows a separation between behaviors that can be obtained from a real qutrit and all behaviors which are classically simulable with $d_C=4$. This proves our result that a simulation of the real qutrit has $d_C\ge5$.

\subsection{Qubit}
Gallego et al.\ showed that there exist qubit behaviors which cannot be simulated by a classical bit \cite{Gallego_2010}. They characterized the classical polytope for $d_C=2$ and the $(\mathcal{X},\mathcal{Y},\mathcal{B})=(3,2,2)$ setting and showed that only one non-trivial facet exists. The corresponding inequality is a tight witness for $d_C=2$ which can be violated by (real) qubits. Renner et al.\ showed that in the $(6,11,2)$ setting, there exist qubit behaviors which are not trit-simulable, proving optimality of their simulation protocol~\cite{Renner_PandM}.

We aim to go towards a minimal example. First, when $d_C\ge\mathcal{X}$, a perfect encoding is possible and the simulation becomes trivial. Furthermore, when Bob has only one input, a perfect classical simulation is possible with $d_C=d_Q$ \cite{Frenkel_2015}. Therefore, any scenario that features correlations which are not trit-simulable has $\mathcal{X}\ge4,\mathcal{Y}\ge2$ and $\mathcal{B}\ge2$.
The smallest setting that we identify here is $(6,5,2)$. For this, we consider a setting with six states $\vec{x}_i$ given by the Bloch vectors $\pm\hat{e}_x,\pm\hat{e}_y,\pm\hat{e}_z$ and $\mathcal{Y}$ projective measurements given by the Bloch vectors $\hat{y}_1,...,\hat{y}_{\mathcal{Y}}$. Note that for convenience, we use unnormalized vectors $\vec y_j$ and write $\hat y_j=\vec y_j/\norm{\vec y_j}_2$. Hence the expectation value is given by  $E_{i,j}=\vec x_i\cdot \hat y_j$ and we construct the coefficients of the witness $W_{6,\mathcal{Y},2}$ via
\begin{equation}\label{eq:witness_from_bloch}
    c_{i,j}=\frac{\vec{x}_i \cdot \vec{y}_j}{2}.
\end{equation}
A classical model always exists if $d_C \ge \mathcal{X}$, yielding the algebraic maximum of the witness. For a witness of that form the algebraic maximum is given by
\begin{equation}
    W_{6,\mathcal{Y},2} \le \sum_{j} \norm{\vec{y}_j}_1 = \sum_{j,\ell} \abs{\vec{y}_{j,\ell}}.
\end{equation}
Using the qubit setting, one reaches the qubit value
\begin{equation}
\label{eq:witness_qubit_value}
\begin{split}
    Q &= \sum_{x,y} c_{x,y} E_{x,y}\\
    & = \frac{1}{2}\sum_{i,j} (\vec{x}_i\cdot\vec{y}_j)\left(\vec{x}_i\cdot\frac{\vec{y}_j}{\norm{\vec{y}}_2}\right)\\
    &= \sum_j \norm{\vec{y}_j}_2.
\end{split}\end{equation}

We consider first the case of $\mathcal{Y}=5$ projective measurements given by the  (unnormalized) vectors
\begin{equation}
    \begin{tabular}{c|>{\centering\arraybackslash}p{0.4cm}>{\centering\arraybackslash}p{0.4cm}>{\centering\arraybackslash}p{0.4cm}>{\centering\arraybackslash}p{0.4cm}>{\centering\arraybackslash}p{0.4cm}}
$i$                    & 1 & 2 & 3 & 4 & 5\\ \hline 
\multirow{3}{*}{$\vec{y}_i$}
& 1  & 1  & 1  & 1  & 1\\
& -1 & -1 & 0  & 2  & 2\\
& 1  & 0  & -1 & -1 & 2
\end{tabular}
\end{equation}
Then, the witness $W_{6,5,2}$ as described in Eq.~\eqref{eq:witness_from_bloch} has the classical bounds
\begin{equation}
    W_{6,5,2} \overset{\text{C}2}{\le} 8 \overset{\text{C}3}{\le} 10 \overset{\text{C}4}{\le} 12 \overset{\text{C}5}{\le} 14 \overset{\text{C}6}{\le} 16,
\end{equation}
where the bounds are for $d_C=2$ (C2), $d_C=3$ (C3), etc. These bounds have been found by exhaustive search over all classical strategies, see Appendix~\ref{appendix:C}.
The qubit value according to Eq.~\eqref{eq:witness_qubit_value} is given by
\begin{align}
    Q = 3+2\sqrt{2}+\sqrt{3}+\sqrt{6} \approx 10.01,
\end{align}
which certifies that a qubit is not trit-simulable albeit with a very small relative violation $Q/C_3-1\approx 0.001$.

In a setting with $\mathcal{Y}=10$, we consider the projective measurements given by the  (unnormalized) vectors
\begin{equation}
\begin{tabular}{c|>{\centering\arraybackslash}p{0.4cm}>{\centering\arraybackslash}p{0.4cm}>{\centering\arraybackslash}p{0.4cm}>{\centering\arraybackslash}p{0.4cm}>{\centering\arraybackslash}p{0.4cm}>{\centering\arraybackslash}p{0.4cm}>{\centering\arraybackslash}p{0.4cm}>{\centering\arraybackslash}p{0.4cm}>{\centering\arraybackslash}p{0.4cm}>{\centering\arraybackslash}p{0.4cm}}
$i$                    & 1 & 2 & 3 & 4 & 5 & 6 & 7 & 8 & 9 & 10 \\ \hline 
\multirow{3}{*}{$\vec{y}_i$}
& 1  & -1 & 1  & 1  & 1  & 1  & 0  & 1  & 1  & 0\\
& 1  & 1  & -1 & 1  & 1  & 0 & 1  & -1  & 0  & 1\\
& 1  & 1  & 1  & -1 & 0  & 1  & 1  & 0 & -1  & -1
\end{tabular}
\end{equation}
Then, the witness $W_{6,10,2}$ as described in Eq.~\eqref{eq:witness_from_bloch} has the classical bounds
\begin{equation}
    W_{6,10,2} \overset{\text{C}2}{\le} 12 \overset{\text{C}3}{\le} 15 \overset{\text{C}4}{\le} 18 \overset{\text{C}5}{\le} 21 \overset{\text{C}6}{\le} 24.
\end{equation}
The qubit setting achieves
\begin{equation}
    Q=6\sqrt{2}+4\sqrt{3} \approx 15.41,    
\end{equation}
and $Q/C_3 -1 \approx0.027$. This setting is highly symmetric, as in the Bloch representation, the state vectors point to the $6$ face-centered points of a cube and the measurement vectors point to the $12$ edge-centered points and $8$ vertices of the same cube.

\subsection{Qutrit}
\label{subsec:witness_qutrit}
In this section we will show that the classical simulation of the (real) qutrit requires at least a five letter alphabet, that is, $d_C\ge5$. Here the term \emph{real} refers to a quantum system where all states and measurements are restricted to real-valued coefficients. To this end, we will examine the Yu-Oh setting \cite{YuOh}, which has previously been applied in the context of Kochen-Specker contextuality \cite{Kochen_Specker}. In Section \ref{sec:lo_bounds_graph_coloring} we  discuss the connection to graph coloring and we show that the simulation cost $d_C$ for a given set of input states is lower bounded by the chromatic number of the corresponding orthogonality graph. This consideration yields for the Yu-Oh setting already $d_C\ge 4$. Here we improve this lower bound to $d_C\ge5$.

\begin{res}
    A classical simulation of the (real) qutrit requires at least a five letter alphabet, that is, $d_C\ge5$.
\end{res}

The $13$ Yu-Oh states $\ket{\phi_i} = \sum_k z_{i,k} \ket{k}, i\in[13]$ are given by the following state vectors, up to normalization.
\begin{equation}
\label{eq:yu_oh_states}
\begin{tabular}{c|>{\centering\arraybackslash}p{0.4cm}>{\centering\arraybackslash}p{0.4cm}>{\centering\arraybackslash}p{0.4cm}>{\centering\arraybackslash}p{0.4cm}>{\centering\arraybackslash}p{0.4cm}>{\centering\arraybackslash}p{0.4cm}>{\centering\arraybackslash}p{0.4cm}>{\centering\arraybackslash}p{0.4cm}>{\centering\arraybackslash}p{0.4cm}>{\centering\arraybackslash}p{0.4cm}>{\centering\arraybackslash}p{0.4cm}>{\centering\arraybackslash}p{0.4cm}>{\centering\arraybackslash}p{0.4cm}}
$i$                    & 1 & 2 & 3 & 4 & 5 & 6 & 7 & 8 & 9 & 10 & 11 & 12 & 13 \\ \hline 
\multirow{3}{*}{$\ket{\phi_i}$}
& 1  & -1 & 1  & 1  & 1  & 1  & 0  & 1  & 1  & 0   & 1   & 0   & 0   \\
& 1  & 1  & -1 & 1  & 1  & 0 & 1  & -1  & 0  & 1   & 0   & 1   & 0   \\
& 1  & 1  & 1  & -1 & 0  & 1  & 1  & 0 & -1  & -1  & 0   & 0   & 1  
\end{tabular}
\end{equation}
We construct the $13$ binary measurements $M_{\pm,j}$ also by using the vectors $\ket{\phi_i}$, that is, $M_{+\vert j} = \ket{\phi_j}\bra{\phi_j}$ and $M_{-\vert j} = \openone - \ket{\phi_j}\bra{\phi_j}$. The improved linear program (see Appendix \ref{appendix:A}) allows us to certify that the setting is simulable with $d_C=5$, but not with $d_C=4$. Note that the $(10,4,2)$ setting with the $10$ states $\ket{\phi_i}$, $i\in[10]$ and $4$ measurements $M_{\pm\vert j}$, $j\in[4]$ is sufficient to disprove a classical model with $d_C=4$. The coefficients of the witness extracted from the solution of the dual problem, see Ref.~\cite{Renner_PandM}, are given by
\begin{equation}\label{eq:yuohwitness}
    C^\intercal = 
    \left(\begin{array}{cccccccccc}
    -2 & 3 & 3 & 3 & -3 & -3 & -3 & 2 & 2 & 2 \\
    3 & -2 & 3 & 3 & 2 & 2 & -3 & -3 & -3 & 2 \\
    3 & 3 & -2 & 3 & 2 & -3 & 2 & -3 & 2 & -3 \\
    3 & 3 & 3 & -2 & -3 & 2 & 2 & 2 & -3 & -3 \\
    \end{array}\right),
\end{equation}
where the $(i,j)$ element of $C$ corresponds to the coefficient $c_{i,j}$ of the witness, cf.\ Eq.~\eqref{eq:witness_correlators}. This witness for the $(10,4,2)$ scenario obeys the following classical bounds
\begin{equation}\begin{split}
    W_{10,4,2} 
    &\overset{\text{C}2}{\le} 44 \overset{\text{C}3}{\le} 62 \overset{\text{C}4}{\le} 68 \overset{\text{C}5}{\le} 76 \overset{\text{C}6}{\le} 80\\
    &\overset{\text{C}7}{\le} 88 \overset{\text{C}8}{\le} 92 \overset{\text{C}9}{\le} 98 \overset{\text{C}10}{\le} 104.
\end{split}\end{equation}
The qutrit setting achieves $Q = 72$ and $Q/C_4 -1 = \frac{1}{17} \approx 0.059$. Thus, the real qutrit does not admit a classical simulation model with $d_C=4$.

We remark that the witness that was constructed for the entanglement-assisted prepare-and-measure scenario in Ref.~\cite{Pauwels2022} is also of interest for the prepare-and-measure scenario without entanglement. It allows one to distinguish between qutrit behaviors and classical behaviors realizable with  $d_C=4$. This scenario constitutes a $(9,4,3)$ setting where the input states $\{\ket{\psi_i}\}_i$ correspond to the Hesse symmetric informationally complete POVM \cite{Dang_2013}, that is, up to normalization,
\begin{equation}
\begin{tabular}{c|>{\centering\arraybackslash}p{0.45cm}>{\centering\arraybackslash}p{0.45cm}>{\centering\arraybackslash}p{0.45cm}>{\centering\arraybackslash}p{0.45cm}>{\centering\arraybackslash}p{0.45cm}>{\centering\arraybackslash}p{0.45cm}>{\centering\arraybackslash}p{0.45cm}>{\centering\arraybackslash}p{0.45cm}>{\centering\arraybackslash}p{0.45cm}>{\centering\arraybackslash}p{0.45cm}>{\centering\arraybackslash}p{0.45cm}>{\centering\arraybackslash}p{0.45cm}>{\centering\arraybackslash}p{0.45cm}}
$i$                    & 1 & 2 & 3 & 4 & 5 & 6 & 7 & 8 & 9\\ \hline 
\multirow{3}{*}{$\ket{\psi_i}$}
& 0 & -1 & 1  & 0       & $-\omega$  & 1       & 0          & $-\omega^2$  & 1\\
& 1 & 0 & -1 & 1        & 0        & $-\omega$ & 1          & 0          & $-\omega^2$\\
& -1 & 1 & 0  & $-\omega$ & 1        & 0       & $-\omega^2$  & 1          & 0 
\end{tabular}\quad,
\end{equation}
with $\omega = e^{2\pi i/3}$. Bob performs measurements in the four mutually unbiased bases,
\begin{equation}\label{eq:mub3}
    \begin{pmatrix} 1 & 0 & 0 \\ 0 & 1 & 0 \\ 0 & 0 & 1 \end{pmatrix},
    \begin{pmatrix} 1 & 1 & 1 \\ 1 & \omega & \omega^2 \\ 1 & \omega^2 & \omega \end{pmatrix},
    \begin{pmatrix} 1 & \omega & \omega \\ \omega & 1 & \omega \\ \omega & \omega & 1 \end{pmatrix},
    \begin{pmatrix} 1 & \omega^2 & \omega^2 \\ \omega^2 & 1 & \omega^2 \\ \omega^2 & \omega^2 & 1 \end{pmatrix}.
\end{equation}
The witness $W_{9,4,3}$ constructed in Ref.~\cite{Pauwels2022} has the coefficients $c_{x,y}^b = 4p_Q(b\vert x,y)-1$. We find the following classical bounds,
\begin{equation}
    W_{9,4,3} \overset{\text{C}2}{\le} 30 \overset{\text{C}3}{\le} 33 \overset{\text{C}4}{\le} 35 \overset{\text{C}5}{\le} 36.
\end{equation}
For the given, complex-valued, qutrit states and measurements, we have $Q=36$, that is, the witness attains its algebraic maximum and $Q/C_4 -1\approx 0.029$, confirming $d_C\ge 5$ for this setting \cite{Havlicek2020}.

\section{\label{sec:lo_bounds_graph_coloring}
Restricting classical models via state discrimination}
In this section, we  derive lower bounds on the worst-case communication from scenarios where the set of states and measurements is finite. First, we show that any classical model for a given set of states is restricted by the distinguishability of the states. We then derive a necessary condition for the existence of a classical simulation protocol. From this, we derive a linear program for certifying lower bounds on the simulation cost.
Furthermore, we relate the task of finding lower bounds to graph theory and we find that a lower bound on the simulation cost can be obtained from the chromatic number of the exclusivity graphs of quantum states.

\subsection{Restricting the classical model}
Every classical simulation protocol with a finite set of states has to satisfy the following property \cite{Montina2011}:
\begin{restatable}[]{lem}{corelemma}
\label{lem:core_lemma}
For all $c$ and for all $\lambda$ with $\pi(\lambda)>0$, the set $S_c(\lambda)= \{\rho_x \mid p_A(c|x,\lambda)>0\}$ does not contain any pair of orthogonal states.
\end{restatable}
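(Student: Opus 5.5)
The plan is a proof by contradiction that uses one projective measurement which perfectly separates the two orthogonal states. Bob must reproduce its deterministic statistics, and this forces his response to a common message to be both $0$ and $1$. Throughout I assume that the simulation reproduces Born's rule for all measurements, or at least for the two-outcome projective measurements built below, as in the setting of Section~\ref{sec:cl_sim_PM}.

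First, fix a message $c$ and two inputs $x,x'$ whose states are orthogonal, $\tr(\rho_x\rho_{x'})=0$. Let $P$ be the projector onto the support of $\rho_x$ and consider the binary measurement $M_{+\vert y}=P$, $M_{-\vert y}=\openone-P$. Orthogonality gives $\tr(\rho_{x'}P)=0$, while $\tr(\rho_xP)=1$. Hence
\begin{equation*}
p_Q(+\vert x,y)=1,\qquad p_Q(+\vert x',y)=0 .
\end{equation*}

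Next, insert these values into the classical model \eqref{eq:cl_model}. Since $\sum_c p_A(c\vert x,\lambda)=1$ and $0\le p_B\le 1$, the condition
\begin{equation*}
1=\int\dd\lambda\,\pi(\lambda)\sum_{c} p_A(c\vert x,\lambda)\,p_B(+\vert c,y,\lambda)
\end{equation*}
forces $p_B(+\vert c,y,\lambda)=1$ for almost every $\lambda$ with $\pi(\lambda)p_A(c\vert x,\lambda)>0$. By the same argument, the condition $p_Q(+\vert x',y)=0$ forces $p_B(+\vert c,y,\lambda)=0$ for almost every $\lambda$ with $\pi(\lambda)p_A(c\vert x',\lambda)>0$. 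Suppose a set of $\lambda$ of positive $\pi$-measure had both $p_A(c\vert x,\lambda)>0$ and $p_A(c\vert x',\lambda)>0$. On that set $p_B(+\vert c,y,\lambda)$ would equal both $1$ and $0$, which is a contradiction.

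The main obstacle is the measure-theoretic reading of ``for all $\lambda$ with $\pi(\lambda)>0$''. For a continuous $\Lambda$, the argument only shows that the bad set of $\lambda$ is a null set. I would handle this explicitly as follows. Since the set of states is finite, there are finitely many triples $(c,x,x')$, so the union of all bad sets is still a null set. On this null set one can redefine Alice's encoding, for instance by a fixed injective assignment, without changing any behavior $p_C(b\vert x,y)$. After this modification the statement holds for every $\lambda$ in the support, so it holds without loss of generality. In the discrete case, where $\pi(\lambda)>0$ means $\lambda$ is an atom, the contradiction is immediate and no such modification is needed.
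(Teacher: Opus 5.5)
Your proof is correct and uses the same idea as the paper, which only sketches the argument (citing Montina): Bob, knowing only $(c,\lambda)$, must reproduce a perfectly discriminating projective measurement with certainty, so $p_B(+\vert c,y,\lambda)$ would have to equal both $1$ and $0$. Your integral argument and null-set treatment make this rigorous, but one step in the null-set cleanup needs a fix: a ``fixed injective assignment'' of messages exists only if $d_C\ge\mathcal{X}$, so instead copy the encoding of some good $\lambda_0$ onto the null set, or simply set $\pi$ to zero there, which leaves all behaviors unchanged.
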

The intuition for this property of a simulation protocol is as follows. In every round, Bob's knowledge about the state $\rho_x$ is limited to the shared variable $\lambda$ and the classical message $c$ he received from Alice. Thus, Bob can characterize the set $S_c(\lambda)$ and infer that $\rho_x\in S_c(\lambda)$. Suppose now that $S_c(\lambda)$ contains a pair of orthogonal states $\rho_+,\rho_-$ and Bob is asked to perform a measurement that perfectly discriminates between these two states. Then, he has no way to assign the correct outcome with certainty and the classical simulation model fails to reproduce the correct statistics.

This condition has to be fulfilled in order for the classical simulation to perfectly reproduce the statistics of orthogonal state discrimination. We now show that in general, the distinguishability of the input states strongly constrains the classical models. As discussed before, one can always assume that Alice acts deterministically by absorbing her local randomness into the shared randomness $\lambda$, cf.\ Eq.~\eqref{eq:prob_to_detstr_1} and Eq.~\eqref{eq:prob_to_detstr_2}. In case of a finite setting $(\mathcal{X},\mathcal{Y},\mathcal{B})$, the number of deterministic strategies is finite and the deterministic strategies are enumerated by $\lambda$. As a consequence, we can cast the integral as a finite sum and the classical model reads
\begin{equation}
\label{eq:cl_model_finite_setting}
    p_C(b\vert x,y) = \sum_{\lambda} \pi(\lambda)\sum_{c=1}^{d_C} D_A(c\vert x,\lambda)  p_B(b\vert c,y,\lambda),
\end{equation}
where $D_A(c\vert x,\lambda)\in\set{0,1}$ is the deterministic strategy of Alice with $\sum_c D_A(c|x,\lambda)=1$, and $\lvert\Lambda\rvert = d_C^{\mathcal{X}}$, see also Appendix~\ref{appendix:A}.

We define the set of deterministic strategies, labeled by $\lambda$, that assign the same message to a given pair of inputs $x$ and $x^\prime$, that is,
\begin{equation}
    \Lambda_{=}(x,x^\prime) = \set{\lambda\in\Lambda\vert D_A(c\vert x,\lambda) = D_A(c\vert x^\prime,\lambda) \forall c}.
\end{equation}
Accordingly, the set of deterministic strategies that assign different messages to the inputs $x$ and $x^\prime$ is given by $\Lambda_{\ne}(x,x^\prime) = \Lambda \setminus \Lambda_{=}(x,x^\prime)$.
Suppose now that the set of input states $\set{\rho_x}_{x=1}^\mathcal{X}$ contains at least one pair of orthogonal states. Then, from Lemma~\ref{lem:core_lemma} we see that in particular any deterministic strategy $D_A(c\vert x,\lambda)$ must not assign the same message this pair of orthogonal states, unless the corresponding weight $\pi(\lambda)$ is zero. Generalizing this result, we will now see that the contribution of strategies that assign different messages to a pair of states $\rho_x,\rho_{x'}$ is lower bounded by the trace distance or the distinguishability of the states. Here, $\left\lVert T \right\rVert_1 =\tr(\sqrt{T^\dag T})$ is the trace norm.
\begin{res}
\label{res:state_discrimination}
    Given a classical simulation protocol for a finite set of states $\mathcal{P}=\set{\rho_x}_{x=1}^{\mathcal{X}}$ and a set of measurements $\mathcal{M}$ that includes the optimal measurements for state discrimination for every pair of states, it holds that
    \begin{equation}
    \label{eq:restricted_cl_model}
        \sum_{\lambda\in\Lambda_{\ne}(x,x^\prime)} \pi(\lambda) \ge \frac{1}{2} \left\lVert \rho_x - \rho_{x^\prime} \right\rVert_1 \quad\forall x,x'\in [\mathcal X].
    \end{equation}
\end{res}

The intuition for this property of a simulation protocol is as follows. If Alice and Bob have to be able to perform optimal state discrimination for a pair of input states, the weights $\pi(\lambda)$ of the deterministic strategies that allow them to discriminate between this pair of states have to be sufficiently large. On the other hand, if the weights were lower, they would not be able to reach the Helstrom bound. In this sense, the Helstrom measurements are particularly difficult to simulate by a classical protocol.

\begin{proof}[Proof of Result~\ref{res:state_discrimination}]
    Consider the task of minimum error quantum state discrimination where a source prepares either $\rho_x$ or $\rho_{x^\prime}$ with equal probability and we try to infer the label $x,x^\prime$ by performing a binary measurement with effects $\{M_+,M_-\}$. The success probability is given by
    \begin{equation}\begin{split}
        p_{\text{success}}
        &= \frac{1}{2} \left[\tr(\rho_x M_+)+\tr(\rho_{x^\prime} M_-) \right]\\
        &= \frac{1}{2}+\frac{1}{2}\tr\left[(\rho_x-\rho_{x^\prime})M_+\right].
    \end{split}\end{equation}
    The success probability is maximized by performing the Helstrom measurement \cite{Helstrom1967} for the states $\rho_x,\rho_{x^\prime}$, which we denote by $M_{\pm\vert y}$. It is given by the projectors on the positive and negative eigenspace of $\rho_x - \rho_{x^\prime}$, and the resulting success probability is given by
    \begin{equation}
        p_{\text{success}} = \frac{1}{2} + \frac{1}{4} \left\lVert \rho_x - \rho_{x^\prime} \right\rVert_1.
    \end{equation}
    For the classical simulation, we have

    \begin{align}
        \frac{1}{2}\left\lVert \rho_x - \rho_{x^\prime} \right\rVert_1
        &= \tr[(\rho_x - \rho_{x^\prime})M_{+\vert y}]\nonumber\\
        &= \sum_{\lambda\in\Lambda} \pi(\lambda)\sum_{c=1}^{d_C} p_B(+\vert c,y,\lambda)\nonumber\\
        &\quad\quad\quad\quad\quad
        (D_A(c\vert x,\lambda) - D_A(c\vert x^\prime,\lambda))\nonumber\\
        &= \sum_{\lambda\in\Lambda_{\ne}(x,x^\prime)} \pi(\lambda)\sum_{c=1}^{d_C} p_B(+\vert c,y,\lambda)\nonumber\\
        &\quad\quad\quad\quad\quad
        (D_A(c\vert x,\lambda) - D_A(c\vert x^\prime,\lambda))\nonumber\\
        &\le \sum_{\lambda\in\Lambda_{\ne}(x,x^\prime)} \pi(\lambda),
    \end{align}
    where the last inequality follows by omitting the negative terms and using that $\sum_{c=1}^{d_C} p_B(+\vert c,y,\lambda)D_A(c\vert x,\lambda)\le 1$. As this holds for every pair $x,x^\prime$, the statement is proven.
\end{proof}

Crucially, Result~\ref{res:state_discrimination} provides a necessary condition for the existence of a classical model and thus can be applied to certify lower bounds on the simulation cost. It can be formulated as the following linear program:
\begin{align}
\label{eq:lp_new_LP_lowerbounds}
        \text{given\quad}&\set{\rho_x}_{x=1}^\mathcal{X}, d_C\\
        \text{find\quad}&\pi(\lambda), \quad \lambda\in\set{1,\ldots,d_C}^\mathcal{X}\nonumber\\
        \text{s.t.\quad}&\sum_{\lambda:\lambda_x\ne \lambda_{x'}} \pi(\lambda) \ge \frac{1}{2} \left\lVert \rho_x - \rho_{x'} \right\rVert_1 \quad\forall x,x',\nonumber\\
        &\pi(\lambda) \ge 0 \quad \forall\lambda,\nonumber\\
        &\sum_{\lambda} \pi(\lambda) = 1,\nonumber
\end{align}
where $\lambda_x$ corresponds to the message $c$ for which the corresponding deterministic strategy $D_A(c\vert x,\lambda) = 1$.

This linear program can certify that simulating a set of states $\set{\rho_x}_{x=1}^\mathcal{X}$ with a classical message of length $d_C$ is impossible, without treating the measurements explicitly because we implicitly assume that in particular, the classical model has to be capable of simulating the Helstrom measurements. As an important application: choosing two unbiased bases for the qutrit, see also Eq.~\eqref{eq:mub3}, the linear program is infeasible for $d_C=4$, providing the arguably simplest proof that for the qutrit, $d_C\ge 5$. 
The linear program involves fewer variables and constraints than the linear program for simulating the behaviors, see Eq.~\eqref{eq:lp_feasibility} in Appendix~\ref{appendix:A}. The number of variables is $N_\mathrm{var} = \abs{\Lambda}$ and the number of constraints is given by $N_\mathrm{con} = \abs{\Lambda} + \mathcal{X}(\mathcal{X}-1)/2+1$.

Naively, we have $\abs{\Lambda} = d_C^\mathcal{X}$. However, since the constraints depend only on differences between labels and not the specific labels themselves, they are invariant under relabeling. Therefore, it is sufficient to keep only one representative from each equivalence class of labelings. Furthermore, it suffices to only consider $\lambda$ that use every message $c\in[d_C]$ at least once, as we can always redistribute the weight of assignments that omit one or more labels to those that use all labels. Consequently, $\abs\Lambda$ is given by the Stirling number of the second kind $\stirling{\mathcal{X}}{d_C}$ which counts the number of ways to partition a set of $\mathcal{X}$ elements into $d_C$ non-empty subsets.

In addition, it is sufficient to consider the deterministic strategies that correspond to proper colorings of the orthogonality graph $G$ for the input states $\set{\rho_x}_{x=1}^\mathcal{X}$, since the constraints of the problem enforce the proper coloring. In the next section we will draw the connection to graph theory and graph coloring of orthogonality graphs. Furthermore, in Appendix~\ref{appendix:A}, we discuss the reduction of deterministic strategies in the linear program in more detail, but we will briefly present the main idea here. To this end, note that Eq.~\eqref{eq:restricted_cl_model} implies that
\begin{equation}
\label{eq:excluded_strategies}
    \sum_{\Lambda_=(x,x^\prime)} \pi(\lambda) = 0 \quad \forall x,x^\prime\in[\mathcal X] \;\text{s.t}\; \tr(\rho_x\rho_{x^\prime}) = 0,
\end{equation}
since $\pi(\lambda)\ge 0$ and $\sum_{\lambda\in\Lambda} \pi(\lambda) = 1$.
This is particularly useful as it allows us to reduce the number of variables in the linear program, as we can exclude the $\lambda\in\Lambda$ for which $\pi(\lambda)=0$ beforehand (see also Appendix \ref{appendix:A}).

\subsection{Formulation using Graph theory}
\label{subsec:graph_theory}
We now use graph theory to obtain lower bounds on the communication cost using Lemma~\ref{lem:core_lemma}. To this end, we consider a finite prepare-and-measure setting where the set of inputs $[\mathcal X]$ corresponds to vectors $\ket{\psi_x}$ in a $d_Q$-dimensional Hilbert space. We define the orthogonality graph $G$, where the  vertices are given by $\ket{\psi_x}$ and a pair of vertices is connected if the states are orthogonal $\braket{\psi_x|\psi_{x'}}=0$. The chromatic number $\chi(G)$ denotes the minimum number of colors required for a proper coloring of $G$, that is, a vertex coloring such that adjacent vertices do not share the same color. The chromatic polynomial $P(G,k)$ is a polynomial in $k$ such that its value is the number of proper graph colorings of $G$ using $k$ colors. Thus it is clear that, $\chi(G) = \min\set{k\in\mathbb{N} \vert P(G,k)>0}$.

We can think of Alice's deterministic strategies $D_A(c\vert x,\lambda)$ as colorings of $G$, where for a given $\lambda$, we assign to each vertex $\ket{\psi_x}$ the message (color) $c$ for which $D_A(c\vert x,\lambda)=1$. From Eq.~\eqref{eq:excluded_strategies}, it follows that all colorings which are not proper colorings of $G$ result in $\pi(\lambda)=0$. In particular, when $d_C < \chi(G)$, then there exists no proper coloring of $G$ with $d_C$ colors and hence the classical simulation fails.
\begin{lem}
\label{obs:lower_bound_chrom_number}
    The simulation cost $d_C$ for a quantum system of dimension $d_Q$ is lower bounded by the chromatic number $\chi(G)$ of any orthogonality graph $G$ that has a $d_Q$-dimensional representation. That is,
    \begin{equation}
    \label{eq:lower_bound_from_chromatic_number}
        d_C\ge\max_{G}\chi(G),
    \end{equation}
    where the maximum is over all orthogonality graphs of quantum states in dimension $d_Q$.
\end{lem}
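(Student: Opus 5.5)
The plan is to fix an arbitrary orthogonality graph $G$ with a $d_Q$-dimensional representation and show that any classical simulation protocol with alphabet size $d_C$ forces $d_C\ge\chi(G)$. Maximizing over $G$ then gives Eq.~\eqref{eq:lower_bound_from_chromatic_number}. Concretely, let the vertices of $G$ be vectors $\ket{\psi_x}$, $x\in[\mathcal X]$, in dimension $d_Q$, with edges between orthogonal pairs. We consider the finite prepare-and-measure setting in which Alice's inputs are the states $\rho_x=\ket{\psi_x}\bra{\psi_x}$. For each edge $\{x,x'\}$ we include in Bob's measurements the two-outcome projective measurement $\{\ket{\psi_x}\bra{\psi_x},\openone-\ket{\psi_x}\bra{\psi_x}\}$, which is exactly the Helstrom measurement for that orthogonal pair. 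A simulation of the full $d_Q$-dimensional system must in particular simulate this finite sub-setting, so it suffices to lower bound $d_C$ there.

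First, we pass to deterministic strategies for Alice as in Eqs.~\eqref{eq:prob_to_detstr_1}--\eqref{eq:prob_to_detstr_2}. Since the setting is finite, we write the model in the form of Eq.~\eqref{eq:cl_model_finite_setting} with finitely many $\lambda$. Next, we apply Result~\ref{res:state_discrimination}, or directly Eq.~\eqref{eq:excluded_strategies}, to each edge. Because $\tr(\rho_x\rho_{x'})=0$ gives $\tfrac12\lVert\rho_x-\rho_{x'}\rVert_1=1$, every $\lambda$ with $\pi(\lambda)>0$ lies in $\Lambda_{\ne}(x,x')$ for every edge. Equivalently, by Lemma~\ref{lem:core_lemma}, no message set $S_c(\lambda)$ contains an orthogonal pair.

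We then interpret each such $\lambda$ as a vertex coloring of $G$ that assigns to $x$ the unique $c\in[d_C]$ with $D_A(c\vert x,\lambda)=1$. By the previous step this coloring is proper and uses at most $d_C$ colors. Since $\sum_\lambda\pi(\lambda)=1$, at least one $\lambda$ has positive weight. Hence a proper $d_C$-coloring exists, that is, $P(G,d_C)>0$, and therefore $d_C\ge\chi(G)$.

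The only step needing care is the reduction to deterministic encodings when the original shared randomness is continuous. I would handle it by avoiding determinism entirely and using Lemma~\ref{lem:core_lemma} directly. For $\lambda$ in a set of positive measure, choose for each $x$ any $c$ with $p_A(c|x,\lambda)>0$; the lemma guarantees that this choice is a proper coloring. Some care with measure-zero exceptions is needed: the finitely many conditions from the lemma, one per pair and per $c$, each hold almost everywhere, so they hold simultaneously for some $\lambda$. Everything else follows immediately from the earlier results.
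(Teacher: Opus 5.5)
Your proof is correct and follows the paper's own argument. Both identify Alice's deterministic strategies with colorings of $G$, use Eq.~\eqref{eq:excluded_strategies} (via Result~\ref{res:state_discrimination}, with $\ket{\psi_x}\bra{\psi_x}$ as the Helstrom projector for each orthogonal pair) to assign zero weight to every improper coloring, and conclude from the normalization of $\pi$ that a proper $d_C$-coloring must exist. Your extra almost-everywhere argument via Lemma~\ref{lem:core_lemma} for continuous shared randomness is a valid safeguard but not needed, since the paper's reduction to finitely many deterministic strategies already covers that case.
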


This Lemma directly implies the trivial lower bound $d_C\ge d_Q$, as $\chi(K_{d_Q}) = d_Q$ where $K_{d_Q}$ denotes the complete graph with $d_Q$ vertices, which in dimension $d_Q$ is the orthogonality graph associated to a complete basis of quantum states.

Importantly, there exists exist orthogonality graphs with a chromatic number that is larger than the dimension of their representation. However, we have to go beyond qubits to find such graphs, since every qubit state has only one orthogonal state and thus, every qubit orthogonality graph is $2$-colorable. For qutrits, we can consider the Yu-Oh graph $G_{\text{YO}}$ with $\chi(G_{\text{YO}}) = 4$ \cite{YuOh,Mancinska2016}. The graph has a representation in terms of $13$ real-valued states in dimension $d_Q=3$, see Eq.~\eqref{eq:yu_oh_states}. As a consequence, we have that $d_C\ge d_Q+1$ holds for any simulation protocol with $d_Q\ge 3$, since if the dimension of the representation is increased by one, we can add a vertex that is orthogonal to all previous vertices, increasing the chromatic number by one. Furthermore, for increasing quantum dimension $d_Q$, one can construct families of graphs such that the chromatic number scales exponentially in $d_Q$, for example using Hadamard graphs. Lower bounds on the chromatic number of Hadamard graphs \cite{Frankl_HadamardGraphs,Ihringer2019TheIN} imply that for $d_Q = 4k$, where $k$ is a prime power, it holds that
\begin{equation}
    d_C > \left(\frac{27}{16}\right)^{d_Q/4} > 1.139^{d_Q},
\end{equation}
see Appendix~\ref{appendix:D}. Note that this bound can be surpassed by other methods, see Refs.~\cite{Montina2011,Havlicek2020}. Furthermore, Ref.~\cite{Winter_ChromaticNumber} provides an upper bound on the chromatic number of all orthogonality graphs with a quantum representation in dimension $d_Q$, that is, 
\begin{equation}
\label{eq:upper_bound_chrom_number}
    \max_{G}\chi(G) \le \left( 1+2\sqrt{2} \right)^{2 d_Q}<14.657^{d_Q}.
\end{equation}
We want to stress that while Eq.~\eqref{eq:upper_bound_chrom_number} limits the lower bounds on $d_C$ that can be obtained from the chromatic number of orthogonality graphs, the lower bounds that can be obtained from our linear program in Eqs.~\eqref{eq:lp_new_LP_lowerbounds} can exceed the chromatic number. More precisely, the linear program is infeasible when $d_C < \chi(G)$ but $d_C \ge \chi(G)$ does not imply feasibility: the previous example of two mutually unbiased bases in $d_Q=3$ has chromatic number $\chi(G)=3$ but requires $d_C=5$.

\section{\label{sec:discussion}Discussion}
In this work we have studied classical simulations of single-shot prepare-and-measure scenarios. We first discussed the two main notions of communication cost, namely the average communication cost and the worst-case communication cost. Motivated by Toner and Bacon~\cite{TonerBacon}, the worst-case communication cost has been studied in great detail; we here demonstrated that the average cost can be significantly lower than the worst-case cost. For this we utilized the optimal worst-case protocol for the qubit to give bounds on the average cost, yielding an explicit simulation protocol requiring \unit[1.89]{bits}. For higher quantum dimensions, this gap may become even more significant.

A worst-case communication cost $d_C>d_Q$ can be understood as a quantum advantage where communicating a quantum system of dimension $d_Q$ is significantly more efficient than communicating a classical system of dimension $d_C$. Identifying minimal scenarios providing a quantum advantage is fundamental to our understanding of quantum systems and their possible technological applications. We studied minimal scenarios for such an advantage and provided corresponding witnesses, that is, correlation inequalities where achieving the qubit (qutrit) value requires a classical system of dimension $d_C=4$ ($d_C=5$). The smallest settings that we identified, for example, uses 6 (10) state preparations and 5 (4) two-outcome measurements.

Lower bounds on the worst-case communication cost are typically constructed by providing a (finite) set of states and measurements and showing that the corresponding quantum behavior cannot be simulated classically. We find that using the measurements for optimal pairwise state discrimination already poses strong constraints on the classical dimension. The resulting linear program can be used to rule out the simulability of a set of states without specifying the measurements, thereby also significantly reducing the size of the linear program. As an application, we find that as little as 6 qutrit states suffice to achieve a simulation cost of $d_C\ge 5$.

The most important open question is the classical simulation of a qutrit, in particular, whether the classical dimension can be finite, $d_C< \infty$. We identify a different open simulation problem: the real qubit. Since it is a subset of the qubit, the communication cost cannot be higher than for the qubit, but it is an interesting (and surprisingly intricate) question whether a strictly more efficient simulation with $d_C=3$ is possible. We show that if this is the case, then the resulting protocol cannot be of a simple form. Developing methods to answer this question could be a key ingredient to approach the qutrit case, in particular, by studying low-dimensional sections of the qutrit.

\begin{acknowledgments}
We thank Carlos de Gois, Giulio Gasbarri, Martin J.\ Renner, Pauli Jokinen, Armin Tavakoli and Roope Uola for discussions. We acknowledge the usage of the OMNI cluster of Uni Siegen for computations.
This work was supported by the
Deutsche Forschungsgemeinschaft (DFG, German Research Foundation, project numbers 447948357 and 440958198),
the Sino-German Center for Research Promotion (Project M-0294),
the German Ministry of Education and Research (Project QuKuK, BMBF Grant No. 16KIS1618K), and
the Wallenberg Initiative on Networks and Quantum Information (WINQ).
\end{acknowledgments}

\twocolumngrid
\bibliography{bibliography}

\onecolumngrid

\appendix

\section{\label{appendix:B}The average communication cost of the qubit protocol}
The average communication cost of the protocol is given by
\begin{equation}
    \bar{\mathcal{C}} = \int_0^\pi\text{d}\alpha \frac{\sin\alpha}{2} L(\alpha),
\end{equation}
where $L(\alpha)$ denotes the average codeword length of the encoding and $\frac{1}{2}\sin\alpha=\frac1{4\pi}(\int \dd\phi) \sin \alpha$ is the distribution of the angle $\alpha$ enclosed by a pair of vectors that are drawn independently from a uniform distribution on the unit sphere.

Since our qubit simulation protocol involves a piecewise encoding, the integral is defined piecewise and we have
\begin{align}
            \bar{\mathcal{C}} =
            &\int_{0}^{\pi\log_2{\left(9/8\right)}}
            {\text{d}\alpha\left(\frac{3}{2}+\frac{3\alpha}{2\pi}\right)\frac{\sin{\alpha}}{2}}
            +
            \int_{\pi\log_2{\left(9/8\right)}}^{\pi\log_2{\left(4/3\right)}}{\text{d}\alpha\left(\log_2{(3)}+\frac{\alpha}{\pi}\right)\frac{\sin{\alpha}}{2}}
            +
            \int_{\pi\log_2{(4/3)}}^{\pi\log_2{\left(3/2\right)}}
            {\text{d}\alpha\ 2\ \frac{\sin{\alpha}}{2}}\nonumber\\
            +
            &\int_{\pi\log_2{\left(3/2\right)}}^{\pi\log_2{\left(16/9\right)}}
            {\text{d}\alpha\left(\log_2{(3)} + 1 -\frac{\alpha}{\pi}\right)\frac{\sin{\alpha}}{2}}
            +
            \int_{\pi\log_2{\left(16/9\right)}}^{\pi}
            {\text{d}\alpha\left(3-\frac{3\alpha}{2\pi}\right)\frac{\sin{\alpha}}{2}}.
\end{align}
Due to symmetry, the integral can be simplified to
\begin{align}
            \bar{\mathcal{C}}
            &= \int_{0}^{\pi\log_2{\left(9/8\right)}}{\text{d}\alpha\left(\frac{3}{2}+\frac{3\alpha}{2\pi}\right)\sin{\alpha}}
            +
            \int_{\pi\log_2{\left(9/8\right)}}^{\pi\log_2{\left(4/3\right)}}{\text{d}\alpha\left(\log_2{(3)}+\frac{\alpha}{\pi}\right)\sin{\alpha}}
            +
            2\int_{\pi\log_2{(4/3)}}^{\frac{\pi}{2}}{\text{d}\alpha\sin{\alpha}},
        \end{align}
which evaluates to 
\begin{align}
    \bar{\mathcal{C}}
    &= \frac{3}{2} - \frac{\sin\left(\pi\log_2 3\right)}{\pi}- \frac{\sin\left(2\pi\log_2 3\right)}{2\pi}\\
    &\approx \unit[1.89]{bits}.
\end{align}

\section{\label{appendix:C}Classical bounds}
Given a classical dimension witness
\begin{equation}
    \sum_{b,x,y} c_{x,y}^b\  p_C(b|x,y) \le C_d,
\end{equation}
with $c_{x,y}^b\in\mathbb{R}$.
Since the set of classical behaviors forms a convex set and we are maximizing over a linear functional, the maximum is always attained at an extremal point and the classical bound $C_d$  is given by
\begin{align}
    C_d\
    &= \max_\lambda\left[\sum_{x,c,y,b}c_{x,y}^b\ D_A(c|x,\lambda)D_B(b|c,y,\lambda)\right]\\
    &= \max_\lambda\left[\sum_{c,y,b}D_B(b|c,y,\lambda)\sum_xc_{x,y}^b\ D_A(c|x,\lambda)\right].
\end{align}
In order to maximize this expression, Bob has to output the $b$ that maximizes $\sum_xc_{x,y}^b\ D_A(c|x,\lambda)$, see Refs.~\cite{Ara_jo_2020,Renner_PandM}. Therefore, the classical bound is given by
\begin{equation}
    C_d = \max_\lambda\left[\sum_{c,y}\max_b\sum_xc_{x,y}^b\ D_A(c|x,\lambda)\right].
\end{equation}

\section{\label{appendix:D}Exponential lower bounds from Hadamard graphs}
Let $d_Q$ be a natural number and let $V(d_Q) = \set{\pm1}^{d_Q}$ be the set of vectors that constitute the vertices of the graph. A pair of vertices is connected if the corresponding vectors are orthogonal, that is, the set of edges is given by $E = \set{(u,v)\vert\langle u \vert v \rangle=0,u\in V,v\in V}$. The graphs $H(d_Q) = (V(d_Q),E)$ are Hadamard graphs \cite{Ito1985}. Importantly, every Hadamard graph corresponds to an orthogonality graph of $d_Q$-dimensional quantum states. For each $u\in V(d_Q)$, we consider the quantum state
\begin{equation}
    \ket{\psi(u)} = \frac{1}{\sqrt{N}} \sum_{i=1}^{d_Q} u_i\ket{i}.
\end{equation}
Thus, by Lemma~\ref{obs:lower_bound_chrom_number}, the simulation cost for the $d_Q$-dimensional prepare-and-measure scenario is in particular lower bounded by the chromatic number of the corresponding Hadamard graph $H(d_Q)$, that is,
\begin{equation}
    d_C \ge \chi(H(d_Q)),
\end{equation}
where $\chi(G)$ denotes the chromatic number of the graph $G$. It has been shown that
\begin{equation}
    \alpha(G_{4k}) = 4\sum_{i=0}^{k-1} \binom{4k-1}{i} < \frac{4^{4k}}{3^{3k}},
\end{equation}
for $k=p^q$ where $p\ge3$ is prime and $q\ge1$ \cite{Frankl_HadamardGraphs} and for $k=2^r$ with $r\ge0$ \cite{Ihringer2019TheIN}. Here $\alpha(G)$ denotes the independence number of $G$, that is, the maximal size of a subset that does not contain adjacent vertices. Since $\chi(G)\ge\frac{\vert V \vert}{\alpha(G)}$, we have
\begin{equation}
    \chi(G_{4k}) \ge \frac{2^{4k}}{4\sum_{i=0}^{k-1} \binom{4k-1}{i}} > \frac{2^{4k}3^{3k}}{4^{4k}} = \left(\frac{27}{16}\right)^k.
\end{equation}
Consequently, for $d_Q=4k$,
\begin{equation}
    d_C(d_Q) \ge \chi(G_{4k}) \ge \frac{2^{4k}}{4\sum_{i=0}^{k-1} \binom{4k-1}{i}} > \left(\frac{27}{16}\right)^{d_Q/4},
\end{equation}
when $k$ is a prime power.

\section{\label{appendix:A}Improved formulation of the linear program}
Here, we will introduce the linear program that we employed for our work. The task of deciding whether a target \emph{behavior} $\set{p(b\vert x,y)}_{b,x,y}$ admits a classical model with a classical message of length $d_C$ can be cast as a linear program \cite{Gallego_2010,de_Gois_2021,Renner_PandM,Tavakoli_SDP_review}. First, we will formulate the problem, then we will describe an improvement to the scaling of the problem that allows us to solve instances that were previously intractable.


\subsection{Formulation of the primal problem}
Consider a finite prepare-and-measure scenario where Alice has $\mathcal{X}$ inputs and Bob has $\mathcal{Y}$ inputs and $\mathcal{B}$ outputs. The input for Alice is denoted by $x$, the input for Bob is denoted by $y$ and the output Bob produces is denoted by $b$. The resulting behavior $\set{p(b\vert x,y)}_{b,x,y}$, is given by Born's rule. The task of deciding whether a given behavior admits a classical simulation model with an alphabet length $d_C$ can be phrased as follows:
\begin{align}
        \text{given\quad}&\{p(b|x,y)\}_{b,x,y},d_C\\
        \text{find\quad}&\pi(\lambda),p_A(c|x,\lambda),p_B(b|c,y,\lambda)\nonumber\\
        \text{s.t.\quad}&p(b|x,y) = \int_\Lambda \text{d}\lambda \sum_{c=1}^{d_C} \pi(\lambda)\ p_A(c|x,\lambda)\ p_B(b|c,y,\lambda),\quad\forall b,x,y\nonumber\\
        &\pi(\lambda) \ge 0,\quad\forall \lambda\nonumber\\
        &p_A(c|x,\lambda) \ge 0,\quad\forall c,x,\lambda\nonumber\\
        &\sum_{c=1}^{d_C} p_A(c|x,\lambda) = 1,\quad\forall x,\lambda\nonumber\\
        &p_B(b|c,y,\lambda) \ge 0,\quad\forall b,c,y,\lambda\nonumber\\
        &\sum_{b=1}^{\mathcal{B}} p_B(b|c,y,\lambda) = 1,\quad\forall c,y,\lambda\nonumber
\end{align}
where $\int_\Lambda \text{d}\lambda\;\pi(\lambda)=1$ is fulfilled since $\sum_b p(b\vert x,y) = 1 \;\forall x,y$. When $d_C\ge \mathcal{X}$ or $d_C\ge \mathcal{Y} \mathcal{B}$, a simulation is trivially possible, as Alice can simply encode her state or a suitable measurement outcome for every measurement Bob can perform in the classical message. Note that the above formulation of the problem is not linear, as the objective function is cubic in the optimization variables. In order to rewrite the problem in linear form, it is key to notice that the distributions $\set{p_A(c\vert x,\lambda)}_{c,x,\lambda}$ form a polytope, where the $d_C^{\mathcal{X}}$ extremal points are given by the deterministic distributions $D_A(c\vert x,\lambda)\in\set{0,1}$. Likewise, the distributions $\set{p_B(b\vert c,y,\lambda)}_{b,c,y,\lambda}$ form a polytope, where the $\mathcal{B}^{d_C \mathcal{Y}}$ extremal points are given by the deterministic distributions $D_B(b\vert c,y,\lambda)$. We will refer to the extremal distributions as \emph{deterministic strategies}, as they can be interpreted as the strategies Alice and Bob pursue for a given $\lambda$. It is often useful to think about $\lambda$ as a label that fixes the deterministic strategy $D_A(c\vert x,\lambda)$. Alice is then given an input $x$ and produces the message $c$. Each base-$d_C$ representation of the numbers $\set{0,...,d_C^{\mathcal{X}}-1}$ can be identified with a deterministic strategy $D_A(c\vert x,\lambda)$. In particular, the message $c$ that Alice sends for a given $x$ is given by the $x$-th digit of the base-$d_C$ representation of $\lambda\in\set{0,...,d_C^{\mathcal{X}}-1}$. The same considerations can be applied to the deterministic strategies $D_B(b\vert c,y,\lambda)$ of Bob.

As the total number of deterministic strategies is finite, Fine's theorem \cite{Fine} then allows us to cast the integral as a finite sum, and the formulation of the problem becomes linear. We will be adopting a more recent formulation of the problem that was described in Ref.~\cite{Renner_PandM}. Instead of rewriting the problem such that both Alice and Bob act deterministically, we rewrite the problem such that only one party acts deterministically and we apply a transformation to the problem. Specifically, we let Alice act deterministically and we apply the transformation $p_B^\prime(b\vert c,y,\lambda)\coloneqq\pi(\lambda)p_B(b\vert c,y,\lambda)$ and the problem reads \cite{Renner_PandM}:
\begin{align}
\label{eq:lp_feasibility}
        \text{given\quad}&\{p(b|x,y)\},d_C\\
        \text{find\quad}&\pi(\lambda),p_B^{\prime}(b|c,y,\lambda)\nonumber\\
        \text{s.t.\quad}&p(b|x,y) = \sum_{\lambda\in\Lambda} \sum_{c=1}^{d_C} p_B^{\prime}(b|c,y,\lambda) D_A(c|x,\lambda),\quad\forall b,x,y\nonumber\\
        &p_B^{\prime}(b|c,y,\lambda) \ge 0,\quad\forall b,c,y,\lambda\nonumber\\
        &\sum_{b=1}^{\mathcal{B}} p_B^{\prime}(b|c,y,\lambda) = \pi(\lambda),\quad\forall c,y,\lambda\nonumber
\end{align}
where $\lvert\Lambda\rvert = d_C^{\mathcal{X}}$. That is, in this formulation, the size of the problem only scales exponentially in the number of inputs for Alice. As noted by the authors of Ref.~\cite{Renner_PandM}, when $\mathcal{B}^{d_C \mathcal{Y}}>d_C^{\mathcal{X}}$,it might be more efficient to let Bob act deterministically and apply the transformation $p_A^\prime(c\vert x,\lambda) \coloneqq \pi(\lambda) p_A(c\vert x,\lambda)$.

The above linear program is formulated as a feasibility problem, which can readily be transformed into a robustness optimization problem. Instead of deciding whether a given behavior can be classically simulated with a $d_C$-dimensional system, we will instead find the maximum visibility parameter $\eta$ such that the probabilities $\eta\;p(b\vert x,y)+(1-\eta)\frac{1}{\mathcal{B}}$ admit a classical model. The term $\frac{1}{\mathcal{B}}$ represents the probability of Bob correctly guessing the measurement outcome.
The robustness formulation of the linear program reads as follows \cite{Renner_PandM}:
\begin{align}
        \text{given\quad}&\{p(b|x,y)\},d_C\label{eq:lp_robustness}\\
        \text{max.\quad}&\eta\nonumber\\
        \text{s.t.\quad}&\eta\;p(b|x,y) + (1-\eta) \frac{1}{\mathcal{B}} = \sum_{\lambda\in\Lambda} \sum_{c=1}^{d_C} p_B^{\prime}(b|c,y,\lambda) D_A(c|x,\lambda),\quad\forall b,x,y\nonumber\\
        &p_B^{\prime}(b|c,y,\lambda) \ge 0,\quad\forall b,c,y,\lambda\nonumber\\
        &\sum_{b=1}^{\mathcal{B}} p_B^{\prime}(b|c,y,\lambda) = \pi(\lambda),\quad\forall c,y,\lambda\nonumber
\end{align}
where $\lvert\Lambda\rvert = d_C^{\mathcal{X}}$. Here, $\eta\ge 1$ certifies the existence of a classical model, while $\eta<1$ certifies that the given behavior does not admit a simulation using a classical alphabet of length $d_C$. In an instance where $\eta<1$, the dual problem provides a hyperplane that separates all classical behaviors and certain nonclassical behaviors. In Ref.~\cite{Renner_PandM}, the explicit form of the dual of the above problem is derived.

\subsection{Improving the linear program}
We will now describe two improvements that can be applied to both the feasibility formulation from Eqs.\ \eqref{eq:lp_feasibility}, as well as the robustness formulation from Eqs.\ \eqref{eq:lp_robustness}. The first improvement depends on orthogonality relations within the set of quantum states $\mathcal{P}$ that gives rise to the target behavior $\set{p(b\vert x,y)}_{b,x,y}$. The second improvement is fully general and does not depend on the specific structure of the problem, but it only leads to an improvement that is constant with respect to the input size. 

Let us start by discussing the first improvement that depends on the orthogonality relations within $\mathcal{P}$. We will assume that for every pair of orthogonal states, the set of measurements includes a measurement for perfect state discrimination, as it makes the formulation of the improvement easier. The adaptations that need to be made when this requirement is not met, can also be directly inferred from our subsequent discussions. From Result~\ref{res:state_discrimination} and also from Lemma~\ref{lem:core_lemma}, it follows that a deterministic strategy $D_A(c\vert x,\lambda)$ that assigns the same message $c$ to a pair of orthogonal states does not contribute to the classical model, that is, the corresponding weight $\pi(\lambda)=0$. As a consequence, we have that given a pair of orthogonal states $\rho_x,\rho_{x^\prime}$, it holds that,
\begin{equation}
    \sum_{\lambda\in\Lambda_{=}(x,x^\prime)} \pi(\lambda) = 0,
\end{equation}
where $\Lambda_{=}(x,x^\prime) = \set{\lambda\in\Lambda\vert D_A(c\vert x,\lambda) = D_A(c\vert x^\prime,\lambda) \forall c}$ (cf.\ Result~\ref{res:state_discrimination}). Thus, just by the orthogonality relations of $\rho_x$ and $\rho_{x^\prime}$, we can restrict the number of variables to $\lvert\Lambda_{\ne}(x,x^\prime)\rvert<\lvert\Lambda\rvert = d_C^{\mathcal{X}}$. When taking into account all pairs of orthogonal states, we have that
\begin{equation}
    \pi(\lambda) = 0 \quad\forall \lambda\in\bigcup_{\substack{\rho_{x},\rho_{x^\prime}\in\mathcal{P},\\\tr(\rho_x \rho_{x^\prime})=0}} \Lambda_=(x,x^\prime).
\end{equation}
Thus, we can exclude the variables $\lambda$ such that $\pi(\lambda)=0$, and the corresponding constraints from the problem. In this way we enforce constraints that are included in the formulation of the problem by removing unnecessary variables from the formulation of the problem.

The improved formulation of the feasibility problem from Eq.~\eqref{eq:lp_feasibility} then reads as follows:
\begin{align}
        \text{given\quad}&\{p(b|x,y)\},d_C\\
        \text{find\quad}&\pi(\lambda),p_B^{\prime}(b|c,y,\lambda)\nonumber\\
        \text{s.t.\quad}&p(b|x,y) = \sum_{\lambda\in\Lambda(\mathcal{P})} \sum_{c=1}^{d_C} p_B^{\prime}(b|c,y,\lambda) D_A(c|x,\lambda),\quad\forall b,x,y\nonumber\\
        &p_B^{\prime}(b|c,y,\lambda) \ge 0,\quad\forall b,c,y \quad\text{and}\quad \forall\lambda\in\Lambda(\mathcal{P})\nonumber\\
        &\sum_{b=1}^{\mathcal{B}} p_B^{\prime}(b|c,y,\lambda) = \pi(\lambda),\quad\forall c,y \quad\text{and}\quad \forall\lambda\in\Lambda(\mathcal{P})\nonumber
\end{align}
where
\begin{equation}
    \Lambda(\mathcal{P}) = \Lambda \setminus \bigcup_{\substack{\rho_{x},\rho_{x^\prime}\in\mathcal{P},\\\tr(\rho_x \rho_{x^\prime})=0}} \Lambda_=(x,x^\prime),
\end{equation}
and $\Lambda_{=}(x,x^\prime) = \set{\lambda\in\Lambda\vert D_A(c\vert x,\lambda) = D_A(c\vert x^\prime,\lambda) \forall c}$. In words, the set $\Lambda(P)$ is given by all the $\lambda$ such that the corresponding deterministic strategy does not assign the same message to any pair of orthogonal states in $\mathcal{P}$. Every deterministic strategy $D_A(c\vert x,\lambda)$ corresponds to a coloring of the orthogonality graph induced by $\mathcal{P}$ (see Section \ref{sec:lo_bounds_graph_coloring}). In short, the orthogonality graph $G=(\mathcal{P},E)$ induced by the set of the states $\mathcal{P}$ consists of a set of vertices, given by $\mathcal{P}$, and a set of edges given by $E=\set{(\rho_x,\rho_{x^\prime})\vert \tr(\rho_x \rho_{x^\prime})=0,\rho_x\in\mathcal{P},\rho_{x^\prime}\in\mathcal{P}}$.

We can identify the $D_A(c\vert x,\lambda)$ with colorings of the vertices of $G(\mathcal{P},E)$, as for a given $\lambda$, they assign a message (color) $c$ to every state $\rho_x$. The set $\Lambda(\mathcal{P})$ corresponds to exactly those $\lambda$, for which $D_A(c\vert x,\lambda)$ constitutes a proper coloring of the orthogonality graph $G(\mathcal{P},E)$, that is, a coloring such that adjacent vertices do not have the same color. Thus, we have that
\begin{equation}
    \lvert\Lambda(\mathcal{P})\rvert = P(G,d_C),
\end{equation}
where $P(G,d_C)$ is the chromatic polynomial. It counts the number of proper $d_C$ colorings of the orthogonality graph $G=(\mathcal{P},E)$. As a consequence, in comparison to the original formulation, the number of variables is reduced by a factor of $ R =\frac{\lvert\Lambda(\mathcal{P})\rvert}{\lvert\Lambda\rvert}=\frac{P(G,d_C)}{d_C^{\mathcal{X}}}$ and similarly, the number of constraints reduces by approximately the same factor.

Whenever $\mathcal{P}$ does not contain any pair of orthogonal states, we have $R=1$, that is, the complexity of the problem is not reduced. On the other hand, when $\mathcal{P}$ contains at least one pair of orthogonal states, $R<1$. For example, when $\mathcal{P}$ consists of $n$ complete bases in $d_Q$ (and $d_C\ge d_Q$), we have
\begin{align}
    R 
    &= \frac{\lvert\Lambda(\mathcal{P})\rvert}{\lvert\Lambda\rvert}\\
    &= \left(\frac{P(K_{d_Q},d_C)}{d_C^{d_Q}}\right)^n\\
    &= \left(\frac{d_C!}{d_C^{d_Q}(d_C-d_Q)!}\right)^n,
\end{align}
where $K_q$ denotes the fully connected graph with $q$ vertices. For $d_C<d_Q$ we have $R=0$ which contradicts the existence of a classical model.

We will now discuss our second improvement to the linear program which can be applied in general, independently of the specific target behavior or the underlying set of quantum states. It is based on the observation that all deterministic strategies $D_A(c\vert x,\lambda)$ that are equivalent up to a relabeling of $c$ can be merged into a single deterministic strategy.

To this end, consider a pair of deterministic strategies $D_A(c\vert x,\lambda_1)$ and $D_A(c\vert x,\lambda_2)$ that are equivalent up to a relabeling. The simplest example for a pair of such strategies for $d_C=2$ ($c\in\set{1,2}$) is given by the strategy that assigns the message $c=1$ for all inputs and the strategy that assigns $c=2$ for all inputs.

It is key to notice that both these strategies are equivalent up to a local coin on Bob's side. That is, Bob receives the same information about $x$, whether they jointly draw $\lambda_1$ or $\lambda_2$, the only difference lies in a relabeling. However, this relabeling can also be performed by a local coin on Bob's side, that is, it can be accounted for by his responses $p_B^\prime(b\vert c,y,\lambda)$. Thus, for a set of strategies $D_A(c\vert x,\lambda)$ that are equivalent up to relabeling, it suffices to pick one of them and exclude the others from the classical model, effectively setting their weight $\pi(\lambda)$ to zero. Given the set of all deterministic strategies $\Lambda$ with $\lvert\Lambda\rvert = d_C^{\mathcal{X}}$, this improvement to the linear program allows us to reduce the number of deterministic strategies to
\begin{align}
    N_\lambda(d_C,\mathcal{X}) 
    &= \sum_{j=1}^{d_C} \stirling{\mathcal{X}}{j}\\
    &= \sum_{j=1}^{d_C} \frac{1}{j!}\sum_{k=1}^{j}(-1)^{j-k} \binom{j}{k} k^{\mathcal{X}}.  
\end{align}
Here, $\stirling{k}{l} $ denotes the Stirling number of the second kind which denotes the number of ways to partition a set of $k$ elements into $l$ non-empty subsets. For $\mathcal{X}\rightarrow\infty$, we have that $N_\lambda(d_C,\mathcal{X}) \sim \frac{d_C^{\mathcal{X}}}{d_C!}$. An efficient way to construct the valid strategies is given by constructing all words (colorings) that are in \emph{standard order}. A word of length $\mathcal{X}$ over an alphabet $[d_C]=\set{1,...,d_C}$ is in standard order if whenever a letter $s\;(2\le s\le d_C)$ appears, the letter $s-1$ has already appeared in the word \cite{Arndt}. Formally, the set of all words that are in standard order is given by
\begin{equation}
    W(d_C,\mathcal{X}) = \set{(s_1,s_2,...,s_{\mathcal{X}})\vert s_1=1,s_i\in[\min\set{\max\set{s_1,...,s_{i-1}}+1,d_C}]\;\forall i\in\set{2,...,\mathcal{X}}}.
\end{equation}

Both reductions can be combined such that the deterministic strategies that have to be considered are given by proper colorings of the orthogonality graph $G=(\mathcal{P},E)$ which are in standard order. They are a strict subset of all valid colorings of $G$ and thus, $R<1$ holds true for all graphs.

\end{document}